\documentclass[a4paper, USenglish, cleveref, autoref, thm-restate]{lipics-v2021}

\hideLIPIcs
\nolinenumbers

\usepackage{graphicx} 
\usepackage{hyperref}
\usepackage{todonotes}

\usepackage{amsmath}
\usepackage{amsthm} 
\usepackage{thm-restate}

\title{A Note on Diameter Certification in Trees}

\author{Josef Erik Sedláček}
{Faculty of Information Technology, CTU in Prague, Prague, Czech Republic}
{sedlajo5@fit.cvut.cz}
{0009-0001-7429-2937}
{}

\authorrunning{J.\,E. Sedláček} 

\ccsdesc[500]{Theory of computation~Distributed algorithms}
\ccsdesc[300]{Theory of computation~Graph algorithms analysis}

\funding{This work was supported by the Grant Agency of the Czech Technical University in Prague, grant No. SGS23/205/OHK3/3T/18.}

\begin{document}

\maketitle

\begin{abstract}
In the local certification model, certifying the diameter of general graphs requires large certificates, but trees admit more efficient solutions. In this note, we provide a $1$-local certification scheme that certifies whether the diameter of a given tree is at most $d$ using certificates of at most $3\lceil\log_2(d+1)\rceil$ bits.
 \keywords{local certification, locally checkable proofs, proof-labeling schemes, graph diameter, trees}
\end{abstract}


\section{Introduction}

\paragraph*{Local certification and network structure}

This note focuses on \emph{local certification}, a framework in distributed computing used to verify global properties of a network. In this context, the network topology is naturally modeled as a graph $G=(V,E)$, where vertices represent the computing nodes and edges represent the communication links. We are interested in checking global graph properties such as acyclicity, planarity and bounded diameter.

Since vertices typically only have a local view of their immediate neighborhood, most global properties cannot be verified without external assistance. For example, a vertex cannot determine if the entire graph is bipartite just by communicating with its direct neighbors. To overcome this limitation, the local certification model~\cite{Feuilloley21} has been introduced. In this model, a centralized oracle (the \emph{prover}) assigns a label, called a \emph{certificate}, to each vertex. The vertices then communicate with their neighbors up to a constant radius and decide whether to accept or reject the configuration. Returning to the bipartiteness example, the prover could simply assign a color $\{1,2\}$ to each vertex as its certificate. A vertex accepts if all of its neighbors have a certificate of a different color from its own.

A certification scheme is correct if there exists a certificate assignment that makes all vertices accept when the property holds, and for any assignment, at least one vertex rejects when the property does not hold. The primary measure of efficiency in this setting is the certificate size. Specifically, we consider the maximum size of a certificate over all vertices in the graph, measured in bits, and the objective is to minimize this maximum size. The threshold of $\Theta(\log n)$ bits, where $n$ is the number of vertices, has emerged as the standard baseline for compact local certification~\cite{GoosS16}. This size is significant because it enables the certification of structures like spanning trees.

This general verification mechanism can be formalized through various models, most notably \emph{proof-labeling schemes}~\cite{KormanKP10} and \emph{locally checkable proofs}~\cite{GoosS16}. While these settings typically assume that each vertex is equipped with a unique piece of information called an \emph{identifier} of size $O(\log n)$ bits, in this note we operate in the more restrictive \emph{anonymous model}, meaning that vertices do not possess any identifiers. Note that since our goal is to establish an upper bound, presenting a scheme for a model without identifiers yields a stronger result, as the correctness and size bounds immediately carry over to settings where identifiers are available. For a comprehensive overview of these different certification models and their variants, we refer the reader to~\cite{Feuilloley21}.

In this note, our main objective is to certify that the diameter of a given tree is at most a constant $d$. 
The classical approach to local certification often considered general graphs. It has been proven that certifying that the diameter of a general graph is at most $d$ requires certificates of size $\tilde{\Omega}(n)$ bits per vertex~\cite{CENSORHILLEL2020112}. 

Recent research has shifted towards designing optimized certification schemes for restricted graph classes. By leveraging the underlying topology, the required certificate sizes can be reduced. For example, while earlier works established tight bounds for local properties in bipartite graphs~\cite{GoosS16}, recent developments have provided schemes for leader election in chordal and grid graphs whose size is logarithmic in the diameter~\cite{LeaderWithoutNamingIt}, as well as constant-size leader certification in meshed graphs~\cite{ChalopinCK26}. 
Contributing to this line of research, we focus on tree topologies, providing a simple, and compact scheme for diameter verification.

\paragraph*{Meta-theorems and MSO logic}
A highly successful approach in understanding the capabilities of local certification on restricted graph classes has been the establishment of meta-theorems, inspired by Courcelle's theorem in centralized computing. Bousquet, Feuilloley, and Pierron~\cite{FeuilloleyBP22} proved that on tree topologies, \emph{any} graph property expressible in Monadic Second-Order (MSO) logic can be certified with certificates of size $O(1)$ (Theorem 2.2 in~\cite{FeuilloleyBP22}).

This meta-theorem has direct implications for the certification of the diameter of a graph. For any fixed integer $d$, the property that a graph has diameter at most $d$ can be easily expressed in First-Order (FO) logic by quantifying over all pairs of vertices and asserting the existence of a path of length at most $d$ between them. Therefore, as a direct corollary of the meta-theorem by Bousquet et al., certifying that a tree has diameter at most $d$ can be done with $O(1)$ bits with respect to the number of vertices $n$.

\paragraph*{Motivation for our result}
While the aforementioned MSO meta-theorem completely resolves the asymptotic complexity of diameter certification on trees with respect to $n$, the underlying framework is highly generic and complex. Rather than offering a direct solution for this specific problem, the automata-theoretic approach hides the dependency on the parameter $d$ inside constants that grow rapidly with the quantifier depth of the formula.

In this note, we show that one does not need to invoke heavy machinery to certify the diameter on trees. We present a completely explicit, simple, and direct $1$-local certification scheme. 

\paragraph*{Our Contribution}
We contribute to this line of research by providing a direct certification scheme for the diameter of trees, explicitly bounding the certificate size in terms of $d$. Specifically, we establish the following result:
\begin{restatable}{theorem}{TheoremDiamtrees}
\label{thm:diamtrees}
For tree graphs, there exists a $1$-local certification scheme $(f,\mathcal{A})$ of size $3\lceil\log_2 (d+1)\rceil$ that accepts a tree $T$ if and only if the diameter of $T$ is at most~$d$.
\end{restatable}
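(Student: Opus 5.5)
Our plan is to root the tree at a \emph{center} and have the certificates encode, at each vertex, three numbers in $\{0,\dots,d\}$, each of which fits in $\lceil\log_2(d+1)\rceil$ bits. This gives the claimed total of $3\lceil\log_2(d+1)\rceil$ bits. For a vertex $v$ the certificate is $(r(v), h(v), c(v))$, where:
\begin{itemize}
\item $r(v)$ is the distance from $v$ to a chosen root $\rho$;
\item $h(v)$ is the height of the subtree rooted at $v$, that is, the maximum distance from $v$ to a descendant;
\item $c(v)$ is a global value, namely the claimed diameter $D\le d$, copied identically at every vertex.
\end{itemize}
Since $r(v)\le D\le d$ and $h(v)\le D$, all three values lie in $\{0,\dots,d\}$.

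The anonymous vertices can recover the parent-child structure from $r$: the parent of $v$ is the unique neighbour with $r$ one smaller, and its children are the neighbours with $r$ one larger. Each vertex $v$ then checks the following.
\begin{enumerate}
\item All neighbours carry the same value $c$ as $v$, and $c\le d$.
\item If $r(v)=0$, every neighbour has $r=1$. If $r(v)>0$, exactly one neighbour has $r=r(v)-1$ and all others have $r=r(v)+1$.
\item $h(v)=0$ if $v$ has no children, and otherwise $h(v)=1+\max h(\text{child})$.
\item For any two distinct children $u,w$ of $v$, we have $h(u)+h(w)+2\le c$.
\item $h(v)+r(v)\le c$.
\end{enumerate}
All of this is $1$-local.

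\textbf{Completeness.} Root the tree at $\rho$ and set $c=D$. Then the conditions on $r$ and $h$ hold by definition. Condition (4) holds because the two deepest branches at $v$ form a path of length $h(u)+h(w)+2\le D$. Condition (5) holds because the path from a deepest descendant of $v$ up through $v$ to $\rho$ has length $h(v)+r(v)\le D$. Note also that the root need not be the center for these inequalities.

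\textbf{Soundness.} This is the main step. Since $G$ is a tree and the $r$-values are consistent with condition (2), the $r$-values decrease strictly along parent pointers. Hence every vertex reaches a vertex with $r=0$. Two distinct such vertices cannot exist, because along the tree path joining them the $r$-values would have to rise and then fall, forcing some vertex to have two neighbours with smaller $r$. So there is a unique root $\rho$, and $r$ is exactly the distance to $\rho$. By induction from the leaves, condition (3) then makes $h(v)$ the true subtree height.

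Now take any two vertices $x,y$, and let $v$ be their lowest common ancestor. There are two cases.
\begin{itemize}
\item If $v\in\{x,y\}$, say $v=y$, then $\mathrm{dist}(x,y)\le h(y)\le h(y)+r(y)\le c\le d$.
\item Otherwise $x$ and $y$ lie below distinct children $u,w$ of $v$. Then $\mathrm{dist}(x,y)\le h(u)+h(w)+2\le c\le d$.
\end{itemize}
Thus the diameter is at most $d$.

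The obstacle I anticipate is the uniqueness-of-root argument in the anonymous setting, which condition (2) handles. Condition (5) turns out to be redundant but harmless. In fact one can drop $c$ and compare directly against $d$, which gives an even smaller scheme. The third field is kept only to match the stated bound, or it can be given a different role, such as a checked copy of $D$.
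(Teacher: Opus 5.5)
Your proof is correct, and it takes a genuinely different route from the paper.

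The paper stores two BFS distance labelings, $d_1$ from an arbitrary vertex $r_1$ and $d_2$ from a vertex $r_2$, plus the eccentricity $m_1$ of $r_1$. Its checks 2--4 force $r_2$ to be a farthest vertex from $r_1$. The classical double-sweep fact (a farthest vertex from any vertex is an endpoint of a longest path) then gives $\max_v d_2(v)=\mathrm{diam}(T)$, so the local bound $d_2(v)\le d$ controls the diameter.

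You instead certify one rooting together with subtree heights. At every vertex you check the longest path bending there ($h(u)+h(w)+2$) and the downward paths ($h(v)\le c$). You then conclude by a self-contained lowest-common-ancestor case split, with no global lemma about trees. This buys two things:
\begin{itemize}
\item As you observe, the field $c$ is superfluous. Comparing against $d$ directly gives $2\lceil\log_2(d+1)\rceil$ bits, which already lowers the constant $3$ that the paper asks about at the end.
\item You actually prove that the root is unique, using the strict form of your check (2): every non-parent neighbour has $r(v)+1$. The paper's proof takes $d_i(v)=d(v,r_i)$ for granted, and its check 1(c) tolerates adjacent vertices with equal labels. For instance, $d_1=(0,1,1,0,1,1,0,1)$, $d_2=(1,0,1,1,0,1,1,0)$, $m_1\equiv 1$ on an $8$-vertex path passes all of its checks with $d=1$. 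So the paper's scheme needs exactly your strengthening.
\end{itemize}

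One side remark of yours is wrong: condition (5) is not redundant. Only the summand $r(v)$ is unnecessary. The bound $h(v)\le c$ (or $h(v)\le d$ once $c$ is dropped) is essential. When the root is an endpoint of a path, no vertex has two children, so (4) is vacuous, and (1)--(3) never compare $h$ with $c$ or $d$. Your soundness argument does invoke (5) in the case $v=y$, so the proof itself stands.
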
 

Our approach relies on simulating two runs of a Breadth-First Search (BFS) directly encoded within the certificates. By assigning distance values from chosen endpoints of a longest path in the tree, vertices can verify the diameter threshold locally. While local certification results typically focus only on asymptotic complexity, the simplicity of our construction allows us to optimize the scheme down to the exact bit. We demonstrate that a small multiplicative factor of $3$ suffices, providing a precise alternative to the general MSO meta-theorem.

\section{Model, definitions and notation}

All graphs in this paper are undirected, connected, simple graphs, denoted by $G = (V, E)$.
Let $n$ denote the number of vertices and $d$ the diameter of the graph.
Usually, the vertices are assigned \emph{unique identifiers} encoded on $O(\log n)$ bits, but in this paper we assume the \emph{anonymous} version, where no identifiers are assigned to vertices.
Neighbors of a vertex $v$ are denoted as $N_G(v)$, and if $G$ is clear from the context, $N(v)$ is used. Distance between two vertices $u, v$ is denoted as $d_G(u, v)$, and the subscript is omitted if $G$ is clear from the context.
Let $V[v,r]$ denote the set of vertices at distance at most $r$ from $v$, which we call the $r$-local neighborhood of~$v$.

A \emph{graph property} is formally a set of graphs that is closed under isomorphism.
A \emph{certificate assignment} $P$ for $G$ is a function $P\colon V(G) \rightarrow \{0,1\}^*$ that associates with each vertex a \emph{certificate}. 
We say that $P$ has size $s$ if $|P(v)| \leq s(n)$ for every $v$.
A \emph{verifier} is a function that takes as an input a graph $G$, its certificate assignment $P$ and $v \in V(G)$ and outputs either $0$ or $1$.

The subgraph $G[V[v,r]]$ is denoted as $G[v,r]$ and the restriction of~$P$ to $V[v,r]$ is denoted as $P[v,r]$, that is $P[v,r] \colon V[v,r]\to \{0,1\}^*$. 
A verifier $\mathcal{A}$ is $r$-\emph{local} if $\mathcal{A}(G,P,v) = \mathcal{A}(G[v,r],P[v,r],v)$ for all $G$, $P$, and $v$. 

An $r$-local \emph{certification scheme} certifying a property of graphs $\mathcal{P}$ is a pair $(f,\mathcal{A})$,
where $\mathcal{A}$ is an $r$-local verifier and $f$, called the \emph{prover}, assigns to each $G \in \mathcal{P}$ a certificate assignment $P$ such that the following properties hold.
\begin{itemize}
    \item \emph{Completeness}: If $G \in \mathcal{P}$, then $\mathcal{A}(G[v, r], P[v, r], v) = 1$ for all $v$, where $P =f(G)$.
    \item \emph{Soundness}: If $G \notin \mathcal{P}$, then for every certificate assignment $P'$, there is $v$ such that \\ $\mathcal{A}(G[v, r], P'[v, r], v)~=~0$.
\end{itemize}
We say that $(f, \mathcal{A})$ has size $s : \mathbb{N} \to \mathbb{N}$ if $|f(G)(v)| \leq s(|V(G)|)$ for all $G \in \mathcal{P}$ and all $v \in V(G)$.

\section{Diameter certification on trees of size $3\lceil\log_2(d+1)\rceil$}
In this section, we prove the following theorem
\TheoremDiamtrees*

Let us first describe the scheme, namely the prover strategy on yes-instances and the verification at the vertices.    
Let $T$ be a tree with diameter at most $d$ and $P = f(T)$ be the certificate assignment for $T$.

For every $v\in V$, the certificate $P(v)$ is a triple $(d_1, m_1, d_2)$:
\begin{itemize}
    \item $d_1 = d(r_1,v)$, the distance to $r_1$ from $v$, where $r_1$ is an arbitrarily selected unique vertex.
    \item A number $m_1 = \max_{v\in V}(d(v,r_1))$, the maximum distance to vertex $r_1$ across all $v\in V$.
    \item $d_2 = d(r_2,v)$, where $r_2$ is a unique vertex with $d(r_2,r_1) = m_1$ chosen arbitrarily among all the vertices $u\in V$ such that $d(u,r_1) = m_1$. 
\end{itemize}

Let $d_1(v),m_1(v),d_2(v)$ denote the components of $P(v)$ for a given vertex $v$.
Verification at a vertex $v$ acts as a local consistency check.
The algorithm evaluates a series of conditions and if any of these conditions are met, indicating an invalid or inconsistent certificate assignment, the verifier $\mathcal{A}(v)$ immediately rejects.
If the configuration passes all checks without triggering a rejection, the vertex accepts.

Specifically, verification on a vertex $v$ consists of the following steps: 
\begin{enumerate}
     
    \item For $i\in\{1,2\}$ if:
    \begin{enumerate}
        \item $d_i(v) = 0$ and there exists $u\in N(v)$ $d_i(u) \neq 1$, or
        \item $d_i(v) = k$ for some $k$ and there is not exactly one $u\in N(v)$ such that $d_i(u) = k-1$, or
        \item $d_i(v) = k$ for some $k$ and there is $u\in N(v)$ such that $d_i(u) < k-1$ or $d_i(u) > k+1$, or
        \item $d_i(v) > d$
    \end{enumerate}
    the verifier $\mathcal{A}(v)$ rejects.
    \item If there exists $u\in N(v): m_1(v)\neq m_1(u)$, the verifier $\mathcal{A}(v)$ rejects.
    \item If $d_1(v)>m_1(v)$, the verifier $\mathcal{A}(v)$ rejects.
    \item If $d_2(v) = 0$ and $d_1(v) \neq m_1(v)$, the verifier $\mathcal{A}(v)$ rejects.
    \item Otherwise, $\mathcal{A}(v)$ accepts.
\end{enumerate}

\begin{proof}[Proof of Theorem \ref{thm:diamtrees}]
    First, we show that if a graph is accepted, it must have diameter at most $d$.
         \vspace{10px}
     \newline
    Proof of $\Rightarrow:$ 
    
    The correctness of the diameter follows from the classical two-phase BFS algorithm for computing the diameter of a tree \cite{clrs2022}. Specifically, a BFS from an arbitrary vertex yields a farthest vertex $r_1$, and a subsequent BFS from $r_1$ returns a vertex $r_2$ maximizing the distance from $r_1$, which is equal to the diameter of the tree.

    We thus only need to show that for any vertex $v$, it holds that $d_1(v) = d(v,r_1)$ and $d_2(v) = d(v,r_2)$, and that the value $m_1(v)$ is the same for all vertices and equal $\max_u d_1(u)$, where $u$ is a vertex.  

    Let us start with the value of $m_1$. Assume there are two vertices $v,u$ such that $m_1(u)\neq m_1(v)$. Consider the path from $u$ to $v$. Somewhere on the path, there must be a vertex $x$, where the value $m_1(x) = m_1(u)$, but for the next vertex $y$ on the path $m_1(y) \neq m_1(x)$. The verifier $\mathcal{A}(x)$ would reject according to the condition 2. 

    If $m_1(v)<\max_{u\in V}(d(u,r_1))$, then there is a vertex $x$ such that $d(x,r_1) >m_1(v)$ and $\mathcal{A}(x)$ would reject according to the condition 3. 

    If $m_1(v)>\max_{u\in V}(d(u,r_1))$, then there is no vertex $x$ such that $d(x,r_1) = m_1(x)$ and thus also no vertex such that $d_1(x) = m_1(x)$. Because of that, there cannot be a vertex such that $d_2(x) = 0$, which means there is a vertex $y$ with no neighbor $z\in N(y)$ with $d_2(z) = d_2(y)-1$. The verifier $\mathcal{A}(y)$ would reject according to the condition 1(b).

    For the converse implication, we show that any yes-instance is always accepted.
    \vspace{10px}
    \newline
    Proof of $\Leftarrow:$

    The component $d_1$ is the distance from the origin vertex $r_1$. In trees, it holds that for any vertex with the exception of $r_1$ there always is one neighbor closer to $r_1$ and the others are further away.
    The same holds for $d_2$ and $r_2$, which represent the second run of BFS on the tree.
    Thus, the rejection criteria of condition 1 are not met at any vertex.

    The component $m_1$ is set to the maximum value of $d_1$. Each vertex receives the same values. 
    The rejection criteria of conditions 2, 3, and 4 are thus also not met.

    As the diameter of the tree is $d$, all the components $d_1,m_1, d_2$ can be represented with $\lceil\log_2(d+1)\rceil$ bits and hence the certification scheme is of size $3\lceil\log_2(d+1)\rceil$.

     We have thus shown that the 1-local certification scheme $(f,\mathcal{A})$ of size $3\lceil\log_2(d+1)\rceil$ accepts a tree if and only if it has diameter at most $d$.
\end{proof}

Our direct construction optimizes the hidden constants of general logic-based meta-theorems down to a precise multiplicative factor of $3$. This leaves an open question regarding the exact bit complexity of diameter verification on trees: What is the smallest constant $c$ such that $c \log_2 d + O(1)$ bits are both necessary and sufficient? Determining whether our upper bound of $3$ can be further reduced, or establishing a matching lower bound for anonymous networks, remains unresolved.

\bibliographystyle{plain}
\bibliography{ref}

\end{document}